\DeclareMathOperator*{\argmin}{\arg\!\min}
\theoremstyle{definition}
\theoremstyle{theorem}
\newtheorem{theorem}{Theorem}
\theoremstyle{remark}
\theoremstyle{proposition}
\theoremstyle{corollary}
\theoremstyle{proof}
\newtheorem{assumption}{Assumption}
\theoremstyle{assumption}
\theoremstyle{property}
\theoremstyle{lemma}
\begin{document}
%
\title{Steering Opinion through Dynamic Stackelberg Optimization}
%
%
%

 \author{Hossein Rastgoftar
\thanks{{\color{black}H. Rastgoftar is with the Department
of Aerospace and Mechanical Engineering, University of Arizona, Tucson,
AZ, 85721 USA e-mail: hrastgoftar@arizona.edu.}}
 }
\maketitle

\begin{abstract}
This paper employs the Friedkin-Johnsen (FJ) model to describe the dynamics of opinion evolution within a social network. Under the FJ framework, the society is divided into two subgroups that include stubborn agents and regular agents. The opinions of stubborn agents are not influenced by regular agents, whereas the opinions of regular agents evolve based on the opinions of their neighboring agents. 
By defining the origin as the desired collective opinion of the society, the objective of the paper is to minimize deviations from this desired opinion. To achieve this, a Stackelberg game is established between the stubborn and regular subgroups, where the opinion adjustments of the stubborn agents and the openness variables of regular agents serve as the decision variables. The proposed solution approach integrates quadratic programming and dynamic programming to optimize these decision variables at each discrete time step using forward and backward propagation.

\end{abstract}

\section{Introduction}
The evolution of opinions in social systems has garnered significant attention from the control community in recent years. To better understand how ideas, views, and attitudes spread within groups, researchers have developed various mathematical models to describe opinion dynamics. Among these, the Friedkin-Johnsen (FJ) model \cite{zhou2024friedkin, frasca2024opinion} and the DeGroot model \cite{wu2022mixed, zhou2020two, liu2022probabilistic} are widely used to analyze the evolution of beliefs in social networks. These well-established models have been applied to a range of applications, including forecasting the outcomes of public debates, mitigating the spread of misinformation, and designing recommendation systems. Building on this foundation, this paper adapts the Friedkin-Johnsen model to characterize opinion evolution dynamics and proposes a game-theoretic framework to strategically steer community opinions toward a desired outcome.

\subsection{Related Work}
The Fredrick Johnson (FJ) model is a common method for modeling opinion evolution in social systems \cite{zhou2024friedkin, raineri2025fj, disaro2023extension,  frasca2024opinion}. The evolution of multidimensional opinions—capturing multiple interrelated topics—under the Friedkin-Johnsen (FJ) model has been investigated in \cite{parsegov2016novel, zhou2022multidimensional, xu2023linear, parsegov2016novel, zhou2022multidimensional}, providing insights into how multiple topics co-evolve within social networks. The FJ model has been used to model opinion evolution in signed networks by incorporating antagonistic and cooperative interactions \cite{yin2019signed, kalimzhanov2022co, he2020opinion}. The stability and convergence of the dynamics of the evolution of opinion in random networks are studied in \cite{wang2024final, xing2024transient, xing2024concentration, rastgoftar2025containment}. Co-evolution of opinion and action has been studied in \cite{9303954, mo2022coevolution, wang2024co, 10168221}. In \cite{kalimzhanov2022co}, the  interplay between social link polarity and the propagation of viral phenomena in signed social networks is investigated, where  a model is developed to analyze the antagonistic relationships' influence on information spread. In \cite{he2020opinion}, peer pressure and prejudice are incorporated into the evolution of opinion within signed networks.

Application of game theory in social systems have been investigated by researchers in \cite{he2020opinion, 6760259, li2011optimal, 9303954, mo2022coevolution, wang2024co, 10168221, li2023opinion, kareeva2024stackelberg, zhou2022multidimensional, zhang2011optimal, zhou2024friedkin, koshal2012stackelberg, proskurnikov2017tutorial}
 Opinion evolution was modeled as a mean-field game in \cite{6760259}. The application of control and game-theoretic methods to opinion evolution is surveyed in \cite{proskurnikov2017tutorial}. From a control and optimization perspective, \cite{li2011optimal} formulates utility-based demand response as a Stackelberg game, showing how leader-follower interactions can be leveraged for distributed optimization with bounded rationality and structural coupling—ideas that resonate with the design of our opinion evolution model. The co-evolutionary dynamics of opinion and action have been systematically investigated in \cite{9303954, mo2022coevolution, wang2024co, 10168221}. Researchers have also applied the  Stackelberg game theory to model and analyze opinion dynamics within the FJ framework to strategically perceive influences across the social network and provide some network control recommendations \cite{kareeva2024stackelberg, li2023opinion, kareeva2024stackelberg, zhou2022multidimensional}. By adapting the FJ model, \cite{zhou2024friedkin} applies the Stackelberg game to model influences between stubborn and regular agents.  In \cite{marden2007connections}, cooperative control problems are related to potential games, illustrating how game-theoretic incentives can drive distributed consensus. Similarly,  \cite{zhang2011optimal} designs optimal feedback laws for synchronization in cooperative systems, which provide design principles for aligning agent states under communication constraints.

\subsection{Contribution}

We study opinion dynamics in a society of regular and stubborn agents under the Friedkin–Johnsen (FJ) model \cite{zhou2024friedkin}. The objective is to drive the society’s average opinion toward a consensus state, defined as the origin in the opinion space, which lies within the convex hull of the stubborn agents’ opinions, while minimizing the required effort. We investigate this problem by modeling the interaction between stubborn and regular agents as a finite-horizon Stackelberg game. In this game-theoretic setting, the \emph{stubborn agents} takes actions first  to (i) minimize deviations from their initial opinions and (ii) steer the opinions of regular agents toward a desired value, assumed without loss of generality to be the origin. In response, the \emph{regular agents} seek to (i) minimize their susceptibility to external influence and (ii) align the population’s average opinion with the target. 

We develop an algorithic framework that computes the optimal strategies for both stubborn and regular sub-groups at every discrete time $k$ by iteratively combining  dynamic programming with quadratic programming. Compared to existing work, the main contributions of this paper are as follows:
\begin{enumerate}
    \item \textbf{Openness as a Decision Variable:} The paper obtains a revised form of the FJ dynamics where ``bias'' on the initial opinions is substituted by ``openness'' in accepting the new opinions. By incorporating this revision, we obtain nonlinear network opinion dynamics with the state vector aggregating opinions of regular agents and decision variables aggregating stubborn agents’ change of opinion and regular agents’ openness. Regular agents explicitly control their openness to external influence through bounded openness parameters, modeling both cognitive resistance and strategic behavior.
    \item      \textbf{Asymmetric Agent Roles:} 
    The paper captures structural heterogeneity by modeling stubborn agents as leaders who anticipate the best-response of regular agents, reflecting realistic social hierarchies and influence asymmetries.
    \item \textbf{Bi-Level Optimization Approach:} A hybrid control strategy is proposed, wherein stubborn agents solve a dynamic programming problem, and regular agents solve a constrained quadratic program at each time step. This enables decentralized yet coordinated policy synthesis.  
\end{enumerate}

\subsection{Outline}
This paper is organized as follows: The preliminary notions of opinion evolution under the FJ model is reviewed in Section \ref{Problem Statement}. Steering opinion through dynamic stackelberg optimization is formulated  in Section \ref{Problem Formulation}. Opinion evolution network dynamics is obtained in Section \ref{Opinion Evolution Dynamics} and used in Section \ref{Decentralized Acquisition of Biases and Influences} to determine optimal strategies for the regular agents to optimize theire openness and stubborn agents to minimize their change of opinions. Simulation results are presented in Section \ref{Simulation Results} and followed by the concluding remarks in Section \ref{Conclusion}.

 \section{Preliminaries} \label{Problem Statement}
We consider a network of $N$ agents indexed by $\mathcal{V} = \{1, \dots, N\}$, each holding an opinion $o_i(k) \in \mathcal{O} = [a, b]$ at discrete time $k \in \mathbb{Z}$, where $a < 0 < b$. The interaction among agents is defined by a directed graph $\mathcal{G}(\mathcal{V}, \mathcal{E})$, where $(j,i) \in \mathcal{E}$ indicates that agent $j$ influences agent $i$. Let
\begin{equation}
    \mathcal{N}_i = \{j \in \mathcal{V} : (j,i) \in \mathcal{E}\}
\end{equation}
denote the in-neighbors of agent $i$.

We define the set of \textit{stubborn agents} as those not influenced by others:
\begin{equation}
    \mathcal{V}_S = \{i \in \mathcal{V} : \mathcal{N}_i = \emptyset\},
\end{equation}
and the \textit{regular agents} as $\mathcal{V}_R = \mathcal{V} \setminus \mathcal{V}_S$, each with a time-varying \textit{openness} $\lambda_i(k) \in (0,1]$ quantifying their susceptibility to influence.

Following the Friedkin–Johnsen model \cite{zhou2024friedkin}, the opinion dynamics are given by
\begin{equation}\label{opinionevolutionindividual}
\resizebox{0.99\hsize}{!}{%
        $
o_i(k+1) = 
\begin{cases}
o_i(0) + u_i(k), & i \in \mathcal{V}_S, \\
\lambda_i(k) \sum_{j \in \mathcal{N}_i} w_{i,j}(k) o_j(k) + \left(1 - \lambda_i(k)\right) o_i(0), & i \in \mathcal{V}_R,
\end{cases}
$
}
\end{equation}
where $u_i(k)$ is the input for stubborn agents, $w_{i,j}(k) \geq 0$ is the influence weight satisfying
\begin{equation}
    \sum_{j \in \mathcal{N}_i} w_{i,j}(k) = 1.
\end{equation}

\begin{assumption}\label{assumstubborn}
    The paper assumes that there exist at least two stubborn agents where the convex hull defined by the opinions of the stubborn is an interval enclosing $0$ at every discrete time $k$.
\end{assumption}
Assumption \ref{assumstubborn} has the following implications:
\begin{enumerate}
    \item There exists at least one stubborn agent $i\in \mathcal{V}$ with opinion $o_i(k)<0$ at every discrete time $k$.
    \item There exists at least one stubborn agent  $j\in \mathcal{V}$  with opinion $o_j(k)>0$ at every discrete time $k$.
\end{enumerate}


\section{Problem Formulation}\label{Problem Formulation}
Opinion evolution is formulated as a Stackelberg game between the stubborn and regular subgroups, played over a finite time horizon. Both subgroups aim to reach consensus at the origin of the opinion space. In this game, \textit{stubborn agents} ($\mathcal{V}_S$) act as leaders, selecting their control inputs first while anticipating the best-response strategies of the \textit{regular agents} ($\mathcal{V}_R$), who serve as followers. Regular agents observe the leaders’ actions and adjust their openness decisions to minimize their own cost. Specifically, agents in $\mathcal{V}_S$ aim to minimize the change in their opinions, while agents in $\mathcal{V}_R$ aim to minimize their openness to change.

Stubborn agents select $\{u_j(k)\}_{j \in \mathcal{V}_S}$ to minimize the cost
\begin{equation}
    C_k = \sum_{h=k}^{n-1} \left( \sum_{i \in \mathcal{V}_R} q_i o_i^2(h) + \sum_{j \in \mathcal{V}_S} r_j u_j^2(h) \right) + C_n,
\end{equation}
where $q_i \geq 0$, $r_j > 0$ are weighting coefficients, and $C_n$ is a quadratic terminal cost defined over the final opinions of regular agents.

The regular agents, in response, treat the openness variables as decision variables. At each discrete time $k$, they minimize the cost
\begin{equation}
    J^{'}_k = \frac{1}{2} \sum_{i \in \mathcal{V}_R} \left[ o_i(k+1) \sum_{j \in \mathcal{V}_R} o_j(k+1) + \epsilon \lambda_i^2(k) \right],
\end{equation}
subject to the constraint
\begin{equation}
    \lambda_i(k) \geq 0, \qquad \forall i \in \mathcal{V}_R,\; \forall k.
\end{equation}

To formalize this problem, Section~\ref{Opinion Evolution Dynamics} derives the network opinion dynamics, while Section~\ref{Decentralized Acquisition of Biases and Influences} formulates a quadratic program (QP) for the regular agents and a dynamic programming (DP) strategy for the stubborn agents, consistent with the Stackelberg leader–follower framework.

\section{Opinion Evolution Dynamics}\label{Opinion Evolution Dynamics}
Let $\mathcal{V}_R=\left\{1,\cdots,N_R\right\}$ and  $\mathcal{V}_S=\left\{N_R+1,\cdots,N\right\}$ identify the regular and stubborn agents, respectively. Then,
\begin{equation}
    \mathbf{x}(k)=\begin{bmatrix}o_1(k)&\cdots&o_{N_R}(k)\end{bmatrix}^T\in \mathbb{R}^{N_R\times 1}
\end{equation}
is the state vector, aggregating opinions of \textit{regular} agents.  We define the opinion change of stubborn agents as
\begin{equation}
    \mathbf{u}(k) = \mathbf{v}(k) - \mathbf{v}_0,
\end{equation}
where
\begin{equation}
    \mathbf{v}(k) = \begin{bmatrix} o_{N_R+1}(k) & \cdots & o_N(k) \end{bmatrix}^T \in \mathbb{R}^{N - N_R}
\end{equation}
is the vector of their opinions at time $k$, and $\mathbf{v}_0 = \mathbf{v}(0)$.

The inter-agent influence matrix $\mathbf{W} = [W_{i,j}] \in \mathbb{R}^{N_R \times N}$ is given by
\begin{equation}
    W_{i,j} = \begin{cases}
        \frac{1}{|\mathcal{N}_i|}, & i \in \mathcal{V}_R,~ j \in \mathcal{N}_i \\
        0, & \text{otherwise}
    \end{cases}.
\end{equation}

We define
\begin{equation}
    \mathbf{y}\left(k\right)=\begin{bmatrix}
        \lambda_1(k)&\cdots&\lambda_{N_R}(k)
    \end{bmatrix}^T\in \mathbb{R}^{N_R\times 1}
\end{equation}
as the vector aggregating the openness of regular agents. Then,
\begin{equation}\label{Lambda}
    \mathbf{\Lambda}(\mathbf{y})=\mathrm{diag}\left(\mathbf{y}\right)\in \mathbb{R}^{N_R\times N_R}
\end{equation}
is a diagonal and positive-definite matrix that is called \textit{openness matrix}. Given $\mathbf{W}$ and $\mathbf{\Lambda}$, we define
\begin{equation}
    \mathbf{L}=
        \mathbf{\Lambda}\mathbf{W}    
    \in \mathbb{R}^{N_R\times N}
    \end{equation}
and partition it as follows:
\begin{equation}\label{AB}
    \mathbf{L}\left(\mathbf{y}\right)=\begin{bmatrix}
        \mathbf{A}\left(\mathbf{y}\right)&\mathbf{B}\left(\mathbf{y}\right)
    \end{bmatrix}
    ,
\end{equation}
where $\mathbf{A}\in \mathbb{R}^{N_R\times N_R}$ and $\mathbf{B}\in \mathbb{R}^{N_R\times \left(N-N_R\right)}$. The paper also defines
\begin{equation}\label{d}
    \mathbf{d}\left(\mathbf{y}\right)=\mathbf{B}\left(\mathbf{y}\right)\mathbf{v}_0+\left(\mathbf{I}_{N_R}-\mathbf{\Lambda}\left(\mathbf{y}\right)\right)\mathbf{x}_0,
\end{equation}
where $\mathbf{x}_0=\mathbf{x}(0)$ aggregates the initial opinions of the regular agents, $\mathbf{I}_{N_R}\in \mathbb{R}^{N_R\times N_R}$ the identity matrix is. Given above definitions, the opinions of the regular agents are updated by
\begin{equation}\label{mainopiniondynamics}
\mathbf{x}\left(k+1\right)=\mathbf{A}\left(\mathbf{y}\right)\mathbf{x}\left(k\right)+\mathbf{B}\left(\mathbf{y}\right)\mathbf{u}\left(k\right)+\mathbf{d}\left(\mathbf{y}\right).
\end{equation}

\section{Optimal Strategy for Regular and Stubborn Agents}\label{Decentralized Acquisition of Biases and Influences}
In this section, we propose a game-theoretic model to interactively determine the regular agents' openness vector $\mathbf{y}(k)$ and $\mathbf{u}\left(k\right)$ quantifying the  change of opinion of the stubborn agents. 

\subsection{Optimizing Openness of Regular Agents}\label{Optimizing Openness of Regular Agents}
We propose to obtain $\mathbf{y}(k)$ by solving a QP given by
\begin{equation}\label{QPCost}
J_k\left(\mathbf{x}(k),\mathbf{y}(k),\mathbf{u}(k)\right)={1\over 2}\mathbf{y}^T\mathbf{P}\left(\mathbf{x},\mathbf{u}\right)\mathbf{y}+\mathbf{f}\left(\mathbf{x},\mathbf{u}\right)\mathbf{y}
\end{equation}
subject to inequality constraint
\begin{equation}\label{QPConstraint}
    \mathbf{Z}\mathbf{y}(k)\leq\mathbf{b},
\end{equation}
where $\mathbf{P}\in \mathbb{R}^{N_R\times N_R}$ is positive definite, $\mathbf{f}\in \mathbb{R}^{1\times N_R}$ is a row-vector, and $\mathbf{Z}\in \mathbb{R}^{2N_R\times N_R}$ and  $\mathbf{b}\in \mathbb{R}^{2N_R\times 1}$ are constant vectors. The objective of the QP is to optimize the decision vector $\mathbf{y}(k)$ at every discrete time $k$.

\subsubsection{Matrix $\mathbf{P}$ and Vector $\mathbf{f}$}
The objective of the regular agents is to minimize
\begin{equation}
\resizebox{0.99\hsize}{!}{%
        $
    J'_k(\mathbf{x}(k),\mathbf{y}(k),\mathbf{u}(k))={1\over 2}\left(\mathbf{x}^T(k+1)\mathbf{1}_{N_R\times N_R\textbf{}}\mathbf{x}(k+1)+\epsilon \mathbf{y}^T(k)\mathbf{y}(k)\right),
    $
    }
\end{equation}
where $\epsilon>0$ is a scaling parameter. The opinion dynamics \eqref{mainopiniondynamics}
\begin{equation}
\mathbf{x}(k+1)=\mathbf{H}\left(\mathbf{x}(k),\mathbf{u}(k)\right)\mathbf{y}(k)+\mathbf{x}_0,
\end{equation}
where
\begin{equation}\label{HH}
    \mathbf{H}\left(\mathbf{x}(k),\mathbf{u}(k)\right)=\mathrm{diag}\left(\mathbf{W}\begin{bmatrix}\mathbf{x}^T(k)&\mathbf{u}^T(k)+\mathbf{v}_0^T\end{bmatrix}^T-\mathbf{x}_0\right). 
\end{equation}
The cost function $J_k^{'}$ can be rewritten as follows:
\begin{equation}
     J^{'}_k\left(\mathbf{x}(k),\mathbf{y}(k),\mathbf{u}(k)\right)=J_k\left(\mathbf{x}(k),\mathbf{y}(k),\mathbf{u}(k)\right)+\mathrm{constant},
\end{equation}
where $\mathrm{constant}={1\over 2}\mathbf{x}_0^T\mathbf{x}_0$, and 
\begin{subequations}
\begin{equation}\label{P}
\mathbf{P}\left(\mathbf{x},\mathbf{u}\right)=   \mathbf{H}^T\left(\mathbf{x},\mathbf{u}\right)\mathbf{H}\left(\mathbf{x},\mathbf{u}\right)+\epsilon \mathbf{I}_{N_R} ,
\end{equation}
\begin{equation}\label{f}
   \mathbf{f}\left(\mathbf{x},\mathbf{u}\right)= \mathbf{x}_0^T\mathbf{H}\left(\mathbf{x},\mathbf{u}\right).
\end{equation}
\end{subequations}
\subsubsection{Matrix $\mathbf{Z}$ and Vector $\mathbf{b}$}
For every regular agent $i\in \mathcal{V}_R$, $\lambda_i(k)\in [0,1]$. Therefore,
\begin{subequations}
    \begin{equation}
        \mathbf{Z}=\begin{bmatrix}
            \mathbf{I}_{N_R}&-\mathbf{I}_{N_R}
        \end{bmatrix}
        ^T,
    \end{equation}
    \begin{equation}
        \mathbf{b}=\begin{bmatrix}
            \mathbf{1}_{1\times N_R}&\mathbf{0}_{1\times N_R}
        \end{bmatrix}
        ^T.
    \end{equation}
\end{subequations}
\subsection{Stubborn Agents Opinion Change Optimization}\label{Stubborn Agents Opinion Change Optimization}
We define
\begin{equation}\label{Jsk}
\begin{split}
    C_k=&{1\over2}\sum_{h=k}^{n-1}\left(\mathbf{x}^T(h)\mathbf{Q}\mathbf{x}(h)+\mathbf{u}^T(h)\mathbf{R}\mathbf{u}(h)\right)\\
    +&{1\over 2}\begin{bmatrix}
        \mathbf{x}^T(n)&
        \mathbf{d}^T(\mathbf{y}(n))
    \end{bmatrix}
    \mathbf{S}(n)\begin{bmatrix}
        \mathbf{x}^T(n)&
        \mathbf{d}^T(\mathbf{y}(n))
    \end{bmatrix}^T
\end{split}    
\end{equation}
where $\mathbf{R}\in \mathbb{R}^{\left(N-N_R\right)\times \left(N-N_R\right)}$ is positive definite, $\mathbf{Q}=[Q_{ij}]\in \mathbb{R}^{N_R\times N_R}$ is positive semi definite, and
\begin{equation}
    \mathbf{S}(n)=\begin{bmatrix}
        \mathbf{S}_{11}(n)&\mathbf{S}_{12}(n)\\
        \mathbf{S}_{21}(n)&\mathbf{S}_{22}(n)
    \end{bmatrix}
\end{equation}
is positive semi-definite. Therefore,
\begin{equation}
    \mathbf{S}_{21}(n)=\mathbf{S}_{12}^T(n).
\end{equation}
The  value function at discrete time $t=k, k+1,\cdots,n-1$ is then defined by:
\begin{equation}
    V_{k}(\mathbf{x}(t),\mathbf{y}(k))=\min_{\mathbf{u}(t),\mathbf{u}({t+1}),\cdots,\mathbf{u}({n-1})}C_k
\end{equation}
The optimal control $\mathbf{u}^*(t)$ is then obtained recursively at time $t=k,k+1,\cdots,n-1$ by solving the Bellman equation that can be expressed by
\begin{equation}
\begin{split}
     V_{k}(\mathbf{x}(t),\mathbf{y}(k))=&\min\limits_{\mathbf{u}(t)}\bigg[{1\over 2}\left(\mathbf{x}^T(t)\mathbf{Q}\mathbf{x}(t)+\mathbf{u}^T(t)\mathbf{R}\mathbf{u}(t)\right)\\
     +& V_{k}(\mathbf{x}(t+1),\mathbf{y}(k))\bigg]
\end{split}   
\end{equation}
subject to dynamics \eqref{mainopiniondynamics}.


\begin{algorithm}
  \caption{Solving the openness of regular agents and change of opinions of the stubborn agents.}\label{alg2}
  \begin{algorithmic}[1]
        \State \textit{Get:} Final time $n$, $\mathbf{A}\left(\mathbf{y}\right)$, $\mathbf{B}\left(\mathbf{y}\right)$, $\mathbf{d}\left(\mathbf{y}\right)$, $\mathbf{S}_{11}(n)$, $\mathbf{S}_{12}(n)$, $\mathbf{S}_{22}(n)$, $\mathbf{Q}$, $\mathbf{R}$, and $\gamma$;
        \State $k=0$; 
        \State $\mathbf{v}(k)=\mathbf{v}_0$
        \State $\mathbf{x}(k)=\mathbf{x}_0$
        \For{\texttt{ $k=0,1,\cdots,n-1$}}
            \State $check=1$
            \State $\bar{\mathbf{y}}=\mathbf{0}_{N_R\times 1}$;
            \While{$check = 1$}
                \State Obtain $\mathbf{H}$ by \eqref{HH} given $\mathbf{x}$ and $\mathbf{v}$;
                \State Obtain $\mathbf{f}$ and $\mathbf{P}$ by \eqref{P} and \eqref{f} given $\mathbf{H}$;
                \State Solve the QP to obtain $\mathbf{y}^*(k)$ by Eqs. \label{QPCost} and \eqref{QPConstraint};    
                \If{$\left|\bar{\mathbf{y}}-\mathbf{y}^*\right|\leq \gamma\mathbf{1}_{N_R\times 1}$}
                    \State $check=0$. 
                \Else 
                    \State Obtain $\mathbf{\Lambda}\left(\mathbf{y}^*(k)\right)$ by Eq. \eqref{Lambda};
                    \State Obtain $\mathbf{A}$, $\mathbf{B}$, and $\mathbf{d}$ by \eqref{AB} and \eqref{d} given $\mathbf{y}^*$;
                    \State Solve the DP in § \ref{Stubborn Agents Opinion Change Optimization} to determine $\mathbf{u}^*(k)$;
                    \State Update $\mathbf{v}(k)$: $\mathbf{v}(k)\leftarrow \mathbf{u}^*(k)+\mathbf{u}_0$;
                    \State $\bar{\mathbf{y}}\leftarrow \mathbf{y}^*(k)$;
                \EndIf
           \EndWhile   
       \EndFor           
  \end{algorithmic}
\end{algorithm}

\begin{theorem}
    Assume that the evolution of opinions is modeled by dynamics \eqref{mainopiniondynamics} where the objective of stubborn agents is to assign their change of opinion $\mathbf{u}(k)$ such that the cost function \eqref{Jsk} is minimized as every time $k$. Given the quadratic final cost \eqref{Jsn} with a positive semi-definite matrix $\mathbf{S}(n)$, the cost-to-go is obtained by
    \begin{equation}
    \begin{split}
        V_{k}\left(\mathbf{x}(t),\mathbf{y}(t)\right)=&{1\over 2}\mathbf{x}^T(t)\mathbf{S}_{11}(t)\mathbf{x}(t)+{1\over 2}\mathbf{d}^T(\mathbf{y}(k))\mathbf{S}_{22}(t)\mathbf{d}(\mathbf{y}(k)))\\
        +&\mathbf{x}^T(t)\mathbf{S}_{12}(t)\mathbf{d}(\mathbf{y}(k))
    \end{split}        
    \end{equation}
    for $t=k, k+1,\cdots,n-1$, and 
    the optimal change of opinion $\mathbf{u}^*(t)$ is obtained by 
    \begin{equation}\label{optimalcontrol}
    \mathbf{u}^*(t)=-\mathbf{K}(t)\mathbf{x}(t)-\mathbf{H}(t)\mathbf{d},\qquad t=k,k+1,\cdots,n-1,
\end{equation}
where $\mathbf{K}(t)$, $\mathbf{H}(t)$, $\mathbf{S}_{11}(t)$,  $\mathbf{S}_{22}(t)$, and $\mathbf{S}_{12}(t)$ are updated in a backward fashion as follows:      
\begin{subequations}
     \begin{equation}\label{K}
        \mathbf{K}(t)=\left(\mathbf{B}^T\mathbf{S}_{11}(t+1)\mathbf{B}+\mathbf{R}\right)^{-1}\mathbf{B}^T\mathbf{S}_{11}(t+1)\mathbf{A},
    \end{equation}
    \begin{equation}
    \resizebox{0.99\hsize}{!}{%
        $
        \mathbf{H}(t)=\left(\mathbf{B}^T\mathbf{S}_{11}(t+1)\mathbf{B}+\mathbf{R}\right)^{-1}\mathbf{B}^T\left(\mathbf{S}_{11}(t+1)+\mathbf{S}_{12}(t+1)\right),
        $
        }
    \end{equation}
    \begin{equation}\label{H}
    \begin{split}
        \mathbf{S}_{11}(t)=&\left(\mathbf{A}-\mathbf{B}\mathbf{K}(t)\right)^T\mathbf{S}_{11}(t+1)\left(\mathbf{A}-\mathbf{B}\mathbf{K}(t)\right)\\
        +&\mathbf{K}^T(t)\mathbf{R}\mathbf{K}(t)+\mathbf{Q}
    \end{split}        
    \end{equation}
     \begin{equation}
    \begin{split}
        \mathbf{S}_{22}(t)=&\left(\mathbf{I}_{N_R}-\mathbf{B}\mathbf{H}(t)\right)^T\mathbf{S}_{11}(t+1)\left(\mathbf{I}_{N_R}-\mathbf{B}\mathbf{H}(t)\right)\\
        +&\mathbf{H}^T(t)\mathbf{R}\mathbf{H}(t)+\mathbf{S}_{22}(t+1)\\
        +&2\left(\mathbf{I}_{N_R}-\mathbf{B}\mathbf{H}(t)\right)^T\mathbf{S}_{12}(t+1)\\
    \end{split}        
    \end{equation}
    \begin{equation}
     \resizebox{0.99\hsize}{!}{%
        $
    \begin{split}
        \mathbf{S}_{12}(n-1)=&\left(\mathbf{A}-\mathbf{B}\mathbf{K}(n-1)\right)^T\mathbf{S}_{11}(n)\left(\mathbf{I}_{N_R}-\mathbf{B}\mathbf{H}(n-1)\right)\\
        +&\left(\mathbf{I}_{N_R}-\mathbf{B}\mathbf{K}(n-1)\right)^T\mathbf{S}_{12}(n)+\mathbf{K}^T(n-1)\mathbf{R}\mathbf{H}(n-1)
    \end{split} 
    $
    }
    \end{equation}
\end{subequations}
In addition, the optimal cost is obtained by
\begin{equation}\label{Jsn}
    V_{k}\left(\mathbf{x}(t),\mathbf{y}(t)\right)={1\over 2}\begin{bmatrix}
        \mathbf{x}(t)\\
        \mathbf{d}(\mathbf{y}(k))\\
    \end{bmatrix}^T
    \begin{bmatrix}
        \mathbf{S}_{11}(t)&\mathbf{S}_{12}(t)\\
        \mathbf{S}_{21}(t)&\mathbf{S}_{22}(t)
    \end{bmatrix}\begin{bmatrix}
        \mathbf{x}(t)\\
        \mathbf{d}(\mathbf{y}(k))\\
    \end{bmatrix}
\end{equation}
for every $t=k,k+1,\cdots,n$, where $\mathbf{S}_{21}=\mathbf{S}_{12}^T$.
   
\end{theorem}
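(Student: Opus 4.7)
The plan is backward induction on $t$ from $t=n$ down to $t=k$, taking the block-quadratic form \eqref{Jsn} as the induction hypothesis. The base case $t=n$ requires no minimization, since the terminal term in \eqref{Jsk} is already written in precisely that form with the prescribed $\mathbf{S}(n)\succeq 0$; the symmetry $\mathbf{S}_{21}(n)=\mathbf{S}_{12}^T(n)$ holds at the terminal time and, as will be seen, is preserved by every recursion below.

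For the inductive step I assume the ansatz at time $t+1$, substitute the opinion dynamics \eqref{mainopiniondynamics} --- written compactly as $\mathbf{x}(t+1)=\mathbf{A}\mathbf{x}(t)+\mathbf{B}\mathbf{u}(t)+\mathbf{d}$ --- into $V_{k}(\mathbf{x}(t+1),\mathbf{y}(k))$, and add the stage cost $\tfrac{1}{2}(\mathbf{x}^T(t)\mathbf{Q}\mathbf{x}(t)+\mathbf{u}^T(t)\mathbf{R}\mathbf{u}(t))$. The resulting expression is strictly convex and quadratic in $\mathbf{u}(t)$, with Hessian $\mathbf{R}+\mathbf{B}^T\mathbf{S}_{11}(t+1)\mathbf{B}\succ 0$; this strict positivity is the key invariant of the induction, sustained by $\mathbf{R}\succ 0$ and $\mathbf{S}_{11}(t+1)\succeq 0$, which in turn is preserved by the recursion \eqref{H}. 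First-order optimality in $\mathbf{u}(t)$ yields the normal equation
\begin{equation*}
\left(\mathbf{R}+\mathbf{B}^T\mathbf{S}_{11}(t+1)\mathbf{B}\right)\mathbf{u}(t)=-\mathbf{B}^T\mathbf{S}_{11}(t+1)\mathbf{A}\mathbf{x}(t)-\mathbf{B}^T\left(\mathbf{S}_{11}(t+1)+\mathbf{S}_{12}(t+1)\right)\mathbf{d},
\end{equation*}
which, upon inverting the Hessian, reproduces the feedback law \eqref{optimalcontrol} with the announced $\mathbf{K}(t)$ from \eqref{K} and the announced $\mathbf{H}(t)$.

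The remaining step is purely algebraic. I would plug $\mathbf{u}^{\star}(t)=-\mathbf{K}(t)\mathbf{x}(t)-\mathbf{H}(t)\mathbf{d}$ back into the Bellman right-hand side and regroup the result as $\tfrac{1}{2}\mathbf{x}^T(t)(\cdot)\mathbf{x}(t)+\mathbf{x}^T(t)(\cdot)\mathbf{d}+\tfrac{1}{2}\mathbf{d}^T(\cdot)\mathbf{d}$. Identification of the three coefficient blocks with $\mathbf{S}_{11}(t)$, $\mathbf{S}_{12}(t)$ and $\mathbf{S}_{22}(t)$ delivers the stated Riccati-like recursions; the identity $\mathbf{A}^T\mathbf{S}_{11}(t+1)\mathbf{B}\mathbf{K}(t)=\mathbf{K}^T(t)\left(\mathbf{R}+\mathbf{B}^T\mathbf{S}_{11}(t+1)\mathbf{B}\right)\mathbf{K}(t)$, an immediate consequence of \eqref{K}, is what collapses the $\mathbf{x}\mathbf{x}$-block into the symmetric form \eqref{H}. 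A parallel identity for $\mathbf{H}(t)$ clears the $\mathbf{d}\mathbf{d}$-block, and the cross block reproduces the $\mathbf{S}_{12}$ recursion stated explicitly at $t=n-1$ in the theorem but obtained by an identical computation at each $t$.

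The main obstacle will be the bookkeeping of the cross terms between $\mathbf{x}(t)$ and $\mathbf{d}$: after substitution of $\mathbf{u}^{\star}(t)$, the coefficient of $\mathbf{x}^T(t)(\cdot)\mathbf{d}$ must symmetrize into the precise combination $(\mathbf{A}-\mathbf{B}\mathbf{K}(t))^T\mathbf{S}_{11}(t+1)(\mathbf{I}_{N_R}-\mathbf{B}\mathbf{H}(t))+(\mathbf{I}_{N_R}-\mathbf{B}\mathbf{K}(t))^T\mathbf{S}_{12}(t+1)+\mathbf{K}^T(t)\mathbf{R}\mathbf{H}(t)$ predicted by the statement. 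Once this bookkeeping is verified, positive semidefiniteness of $\mathbf{S}(t)$ follows inductively --- each recursion is a congruence of $\mathbf{S}(t+1)\succeq 0$ plus the positive semidefinite stage penalty $\operatorname{diag}(\mathbf{Q},\mathbf{0})$ --- which in turn guarantees invertibility of the Hessian at the next backward step and closes the induction.
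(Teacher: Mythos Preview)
Your proposal is correct and follows essentially the same approach as the paper: backward dynamic programming starting from the terminal cost, solving the one-step Bellman minimization in $\mathbf{u}$, and reading off the Riccati-type recursions for $\mathbf{S}_{11},\mathbf{S}_{12},\mathbf{S}_{22}$ by identifying coefficients of $\mathbf{x}^T(\cdot)\mathbf{x}$, $\mathbf{x}^T(\cdot)\mathbf{d}$, and $\mathbf{d}^T(\cdot)\mathbf{d}$. The paper carries out the single step from $t=n$ to $t=n-1$ explicitly and then declares the general recursion by replacing $n-1,n$ with $t,t+1$; your framing as a formal induction with the invertibility invariant $\mathbf{R}+\mathbf{B}^T\mathbf{S}_{11}(t+1)\mathbf{B}\succ 0$ is slightly more careful but substantively identical.
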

\begin{proof}
By substituting $k=n-1$, we obtain $J(n-1)$ as follows:
\begin{equation}\label{2ndlastcost}
\begin{split}
     C_k(n-1)=&{1\over2}\left(\mathbf{x}^T(n-1)\mathbf{Q}\mathbf{x}(n-1)+\mathbf{u}^T(n-1)\mathbf{R}\mathbf{u}(n-1)\right)\\
     +&\left(\mathbf{A}\mathbf{x}(n-1)+\mathbf{B}\mathbf{u}(n-1)+\mathbf{d}\right)^T\times \\
     &\mathbf{S}_{11}(n)\left(\mathbf{A}\mathbf{x}(n-1)+\mathbf{B}\mathbf{u}(n-1)+\mathbf{d}\right)\\
     +&{1\over2}\left[\mathbf{d}^T\mathbf{S}_{22}(n)\mathbf{d}+\mathbf{x}^T(n)\mathbf{S}_{12}(n)\mathbf{d}+\mathbf{d}^T\mathbf{S}_{21}(n)\mathbf{x}(n)\right]
\end{split}   
\end{equation}
The optimal control $\mathbf{u}^*(n-1)$ is obtained by solving ${\partial C_k(n-1)\over \partial \mathbf{u}(n-1)}=\mathbf{0}$ as follows:
\begin{equation}\label{optimalcontrol}
    \mathbf{u}^*(n-1)=-\mathbf{K}(n-1)\mathbf{x}(n-1)-\mathbf{H}(n-1)\mathbf{d},
\end{equation}
where
\begin{subequations}
    \begin{equation}
        \mathbf{K}(n-1)=\left(\mathbf{B}^T\mathbf{S}_{11}(n)\mathbf{B}+\mathbf{R}\right)^{-1}\mathbf{B}^T\mathbf{S}_{11}(n)\mathbf{A},
    \end{equation}
    \begin{equation}
    \resizebox{0.99\hsize}{!}{%
        $
        \mathbf{H}(n-1)=\left(\mathbf{B}^T\mathbf{S}_{11}(n)\mathbf{B}+\mathbf{R}\right)^{-1}\mathbf{B}^T\left(\mathbf{S}_{11}(n)+{1\over2}\left(\mathbf{S}_{12}(n)+\mathbf{S}_{21}^T(n)\right)\right).
        $
        }
    \end{equation}
\end{subequations}
By substituting $\mathbf{u}_{n-1}^*$ from \eqref{optimalcontrol} into Eq. \eqref{2ndlastcost}, 
\begin{equation}
    \mathbf{x}\left(n\right)=\left(\mathbf{A}-\mathbf{B}\mathbf{K}(n-1)\right)\mathbf{x}\left(n-1\right)+\left(\mathbf{I}_{N_R}-\mathbf{B}\mathbf{H}(n-1)\right)\mathbf{d}
\end{equation}
and the {\color{black}value function} at time $n-1$ is obtained as follows:
\begin{equation}
\resizebox{0.99\hsize}{!}{%
        $
\begin{split}
   {\color{black}V_k(\mathbf{x}(n-1),\mathbf{y}(k))}=&{1\over2}\big[\mathbf{x}^T(n-1)\mathbf{S}_{11}(n-1)\mathbf{x}(n-1)+\mathbf{d}^T\mathbf{S}_{22}(n-1)\mathbf{d}\\
    +&\mathbf{x}^T(n-1)\mathbf{S}_{12}(n-1)\mathbf{d}+\mathbf{d}^T\mathbf{S}_{21}(n-1)\mathbf{x}(n-1)\big],
\end{split}  
$
}
\end{equation}
where
\begin{subequations}
    \begin{equation}\label{s11}
    \begin{split}
        \mathbf{S}_{11}(n-1)=&\left(\mathbf{A}-\mathbf{B}\mathbf{K}(n-1)\right)^T\mathbf{S}_{11}(n)\left(\mathbf{A}-\mathbf{B}\mathbf{K}(n-1)\right)\\
        +&\mathbf{K}^T(n-1)\mathbf{R}\mathbf{K}(n-1)+\mathbf{Q}
    \end{split}        
    \end{equation}
     \begin{equation}\label{s22}
    \begin{split}
        \mathbf{S}_{22}(n-1)=&\left(\mathbf{I}_{N_R}-\mathbf{B}\mathbf{H}(n-1)\right)^T\mathbf{S}_{11}(n)\left(\mathbf{I}_{N_R}-\mathbf{B}\mathbf{H}(n-1)\right)\\
        +&\mathbf{H}^T(n-1)\mathbf{R}\mathbf{H}(n-1)+\mathbf{S}_{22}(n)\\
        +&\left(\mathbf{I}_{N_R}-\mathbf{B}\mathbf{H}(n-1)\right)^T\mathbf{S}_{12}(n)\\
        +&\mathbf{S}_{21}(n)\left(\mathbf{I}_{N_R}-\mathbf{B}\mathbf{H}(n-1)\right)
    \end{split}        
    \end{equation}
    \begin{equation}\label{symmery12}
     \resizebox{0.99\hsize}{!}{%
        $
    \begin{split}
        \mathbf{S}_{12}(n-1)=&\left(\mathbf{A}-\mathbf{B}\mathbf{K}(n-1)\right)^T\mathbf{S}_{11}(n)\left(\mathbf{I}_{N_R}-\mathbf{B}\mathbf{H}(n-1)\right)\\
        +&\left(\mathbf{I}_{N_R}-\mathbf{B}\mathbf{K}(n-1)\right)^T\mathbf{S}_{12}(n)+\mathbf{K}^T(n-1)\mathbf{R}\mathbf{H}(n-1)
    \end{split} 
    $
    }
    \end{equation}
      \begin{equation}\label{symmery21}
     \resizebox{0.99\hsize}{!}{%
        $
    \begin{split}
        \mathbf{S}_{21}(n-1)=&\left(\mathbf{I}_{N_R}-\mathbf{B}\mathbf{H}(n-1)\right)^T\mathbf{S}_{11}(n)\left(\mathbf{A}-\mathbf{B}\mathbf{K}(n-1)\right)\\
        +&\mathbf{S}_{21}(n)\left(\mathbf{I}_{N_R}-\mathbf{B}\mathbf{K}(n-1)\right)+\mathbf{H}^T(n-1)\mathbf{R}\mathbf{K}(n-1)
    \end{split} 
    $
    }
    \end{equation}
\end{subequations}
It is seen that $\mathbf{S}_{12}(n-1)=\mathbf{S}_{21}^T(n-1)$ per Eqs. \eqref{symmery12} and \eqref{symmery21}. Therefore, ${\color{black}\mathbf{S}}(n-1)$ is symmetric. By substituting $n-1$ and $n$ by $t$ and $t+1$, respectively, $\mathbf{S}_{11}(t)$, $\mathbf{S}_{12}(t)$, and $\mathbf{S}_{22}(t)$ are determined at any time $t=k, k+1, \cdots, n-1$ in a backward fashion.
Given $\mathbf{S}_{11}(t)$, $\mathbf{S}_{12}(t)$, and $\mathbf{S}_{22}(t)$  we can determine $\mathbf{K}(t)$ and $\mathbf{H}(t)$, at any time $t=k, k+1, \cdots, n-1$, by using Eqs. \eqref{K} and \eqref{H}, respectively. Given $\mathbf{x}(k)$, aggregating opinions of the regular agents at time $k$, and $\mathbf{K}$ and $\mathbf{H}(k)$,  the optimal change of opinion of the stubborn agents, denoted by $\mathbf{u}(k)$, is assigned by Eq. \eqref{optimalcontrol}.
\end{proof}

\subsection{Solution}
To solve the Stackelberg game, we adopt a bi-level optimization strategy where stubborn agents (leaders) and regular agents (followers) iteratively compute their optimal strategies through backward and forward recursions, respectively.

Given the state vector $\mathbf{x}(k)$ and the optimal openness vector $\mathbf{y}^*(k)$, the stubborn agents determine their optimal opinion update $\mathbf{u}^*(k)$ by solving the Bellman equation in a backward manner:
\begin{equation}\label{g1}
\begin{split}
     \mathbf{u}^*(k) = &\argmin\limits_{\mathbf{u}(k)} \bigg\{ \frac{1}{2} \left[ \mathbf{x}^T(k) \mathbf{Q} \mathbf{x}(k)+\mathbf{u}^T(k) \mathbf{R} \mathbf{u}(k) \right] \\
     + &V_k\left(\mathbf{x}(k+1), \mathbf{y}^*(k)\right) \bigg\},
\end{split}
\end{equation}
where $V_k(\cdot)$ is the value function capturing the future cost-to-go from time $k+1$.

Conversely, given $\mathbf{x}(k)$ and the leader's optimal action $\mathbf{u}^*(k)$, the regular agents update their openness vector $\mathbf{y}^*(k)$ by solving the following quadratic program in a forward fashion:
\begin{equation}\label{g2}
\mathbf{y}^*(k) = \argmin\limits_{\mathbf{y}(k)} J_k\left(\mathbf{x}(k), \mathbf{y}(k), \mathbf{u}^*(k)\right)
\end{equation}
subject to the constraint \eqref{QPConstraint}, which ensures admissible openness values.

We employ a forward-backward iterative scheme (Algorithm~\ref{alg2}) to compute the Stackelberg equilibrium at each time step $k = 0, \dots, n-1$, alternating between follower (forward) and leader (backward) updates until convergence.


\section{Simulation Results}\label{Simulation Results}
For simulation, we consider the evolution of opinions in a society consisting of $N=98$ agents defined by $\mathcal{V}=\left\{1,\cdots,98\right\}$. There exist $N_R=96$ regular agents and two stubborn agents, therefore, $\mathcal{V}=\mathcal{V}_R\cup\mathcal{V}_S$, where $\mathcal{V}_R=\left\{1,\cdots,96\right\}$ and $\mathcal{V}_S=\left\{97,98\right\}$ define the regular and stubborn agents, respectively. Regular  agents are initially distributed over the interval $\left[o_{97}(0),o_{98}(0)\right]$, where $\mathbf{v}_0=\begin{bmatrix}o_{97}(0)&o_{98}(0)\end{bmatrix}^T=\begin{bmatrix}-1&1\end{bmatrix}^T$.

\begin{figure}[ht]
\centering
\includegraphics[width=0.45 \textwidth]{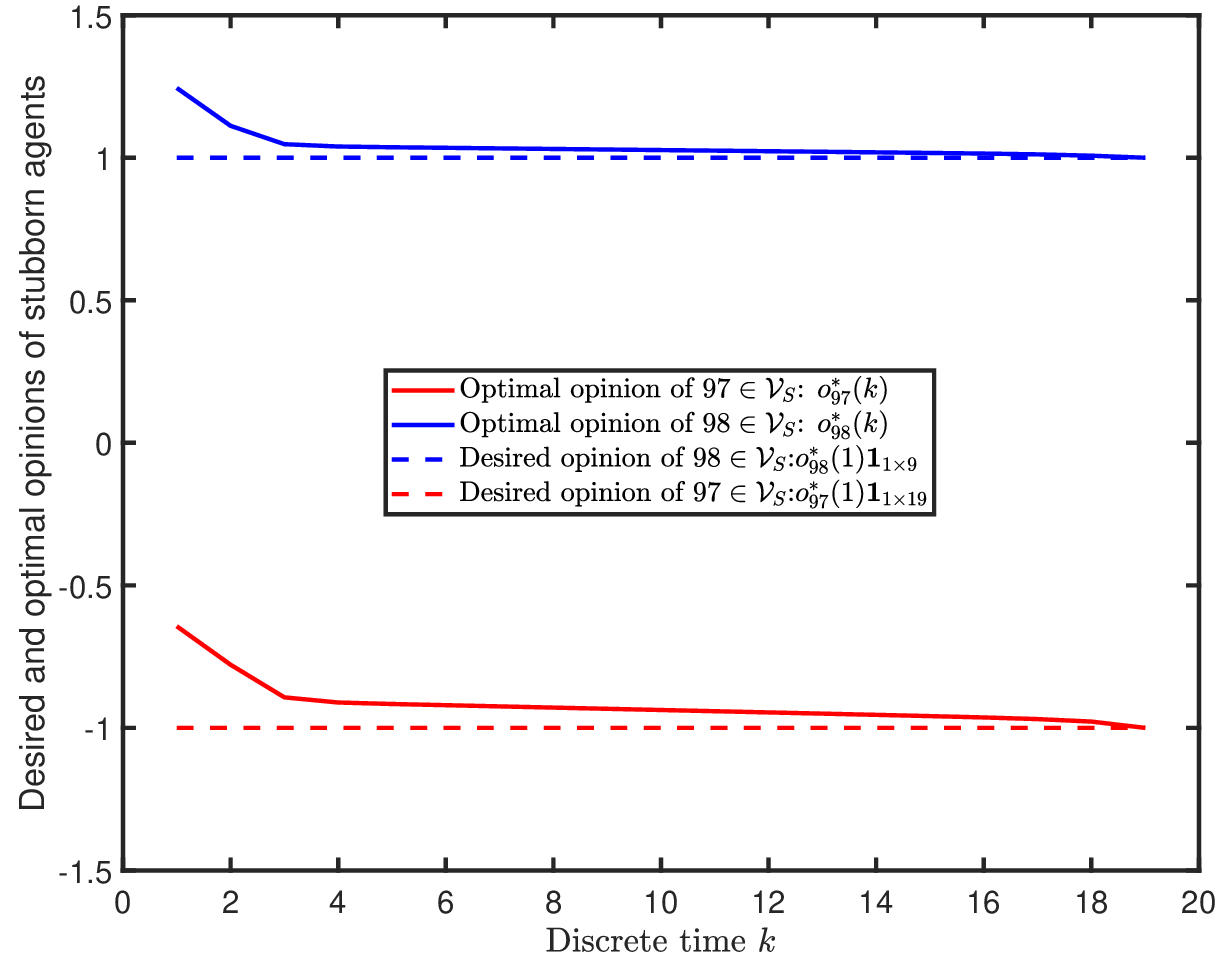}
\caption{Optimal and desired opinions of stubborn agents versus dicrete time $k$ for $k=1,\cdots,19$.}
\label{StubbornOpinions} 
\end{figure}

\begin{figure}[ht]
\centering
\includegraphics[width=0.45 \textwidth]{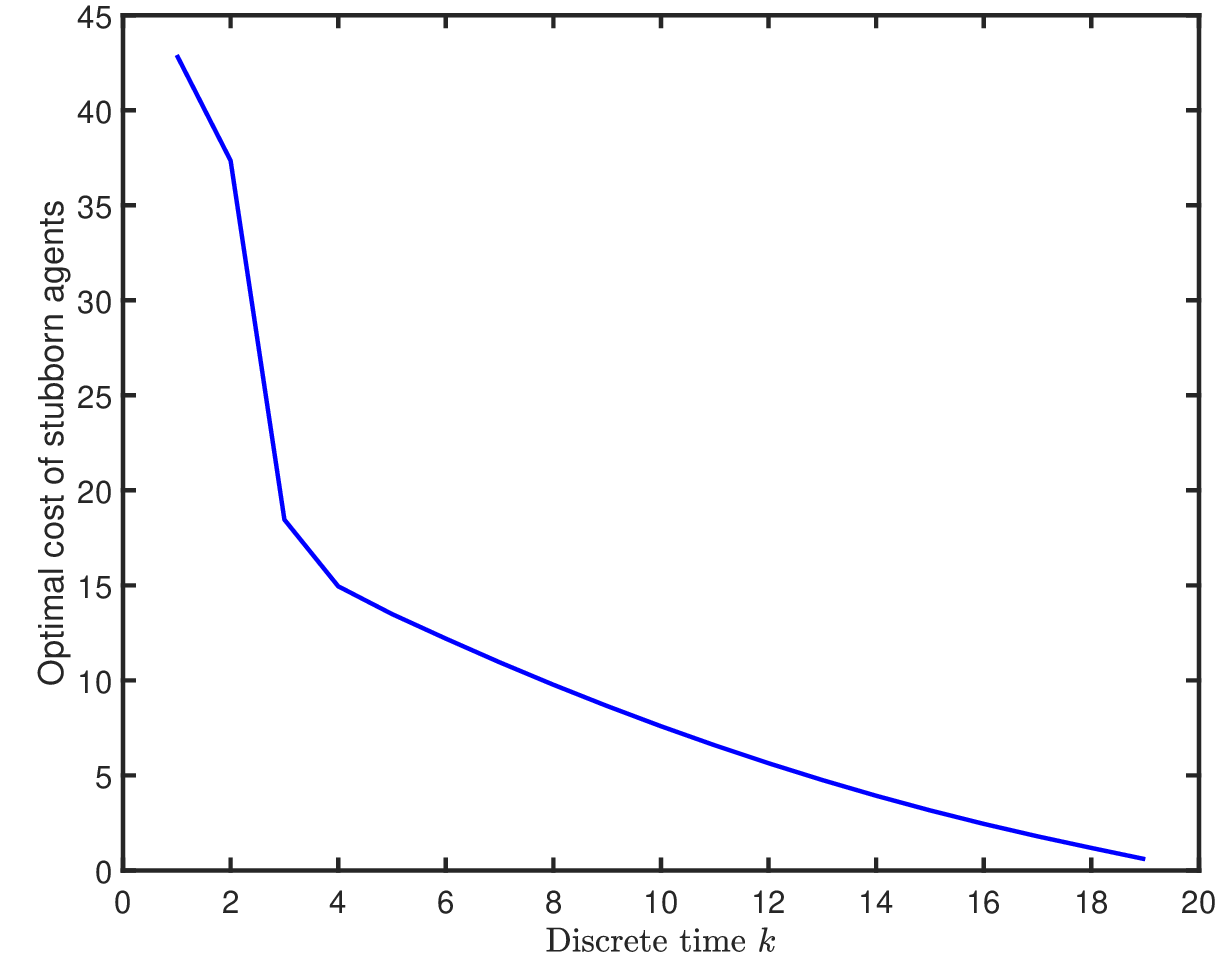}
\caption{Optimal cost $\mathbf{J}_S^*(k)$  for  $k=1,\cdots,19$.}
\label{OptimalCost} 
\end{figure}

\begin{figure}[ht]
\centering
\includegraphics[width=0.45 \textwidth]{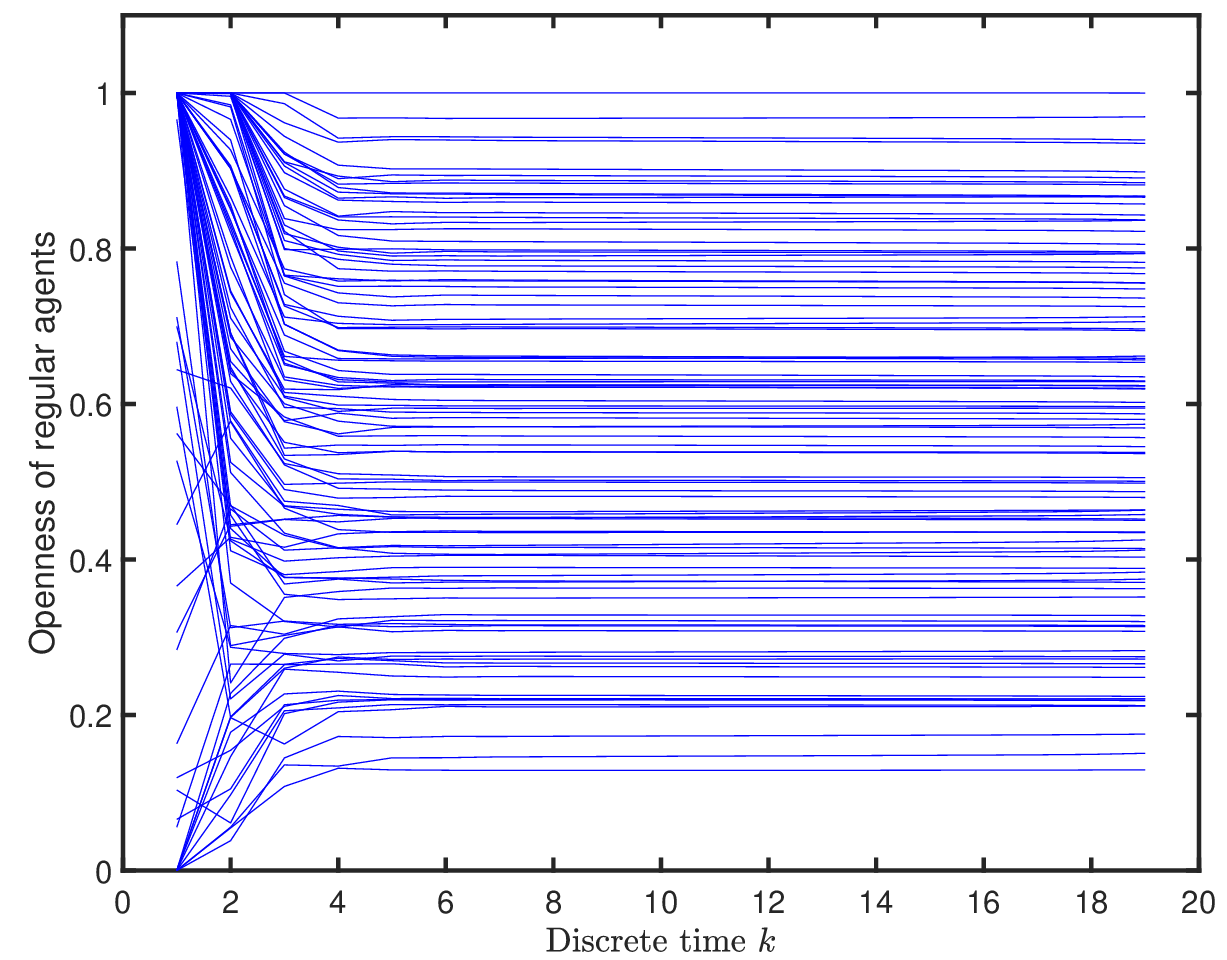}
\caption{Openness of regular agents versus discrete time $k$ for  $k=1,\cdots,19$.}
\label{Openness} 
\end{figure}

\begin{figure}[ht]
\centering
\includegraphics[width=0.45 \textwidth]{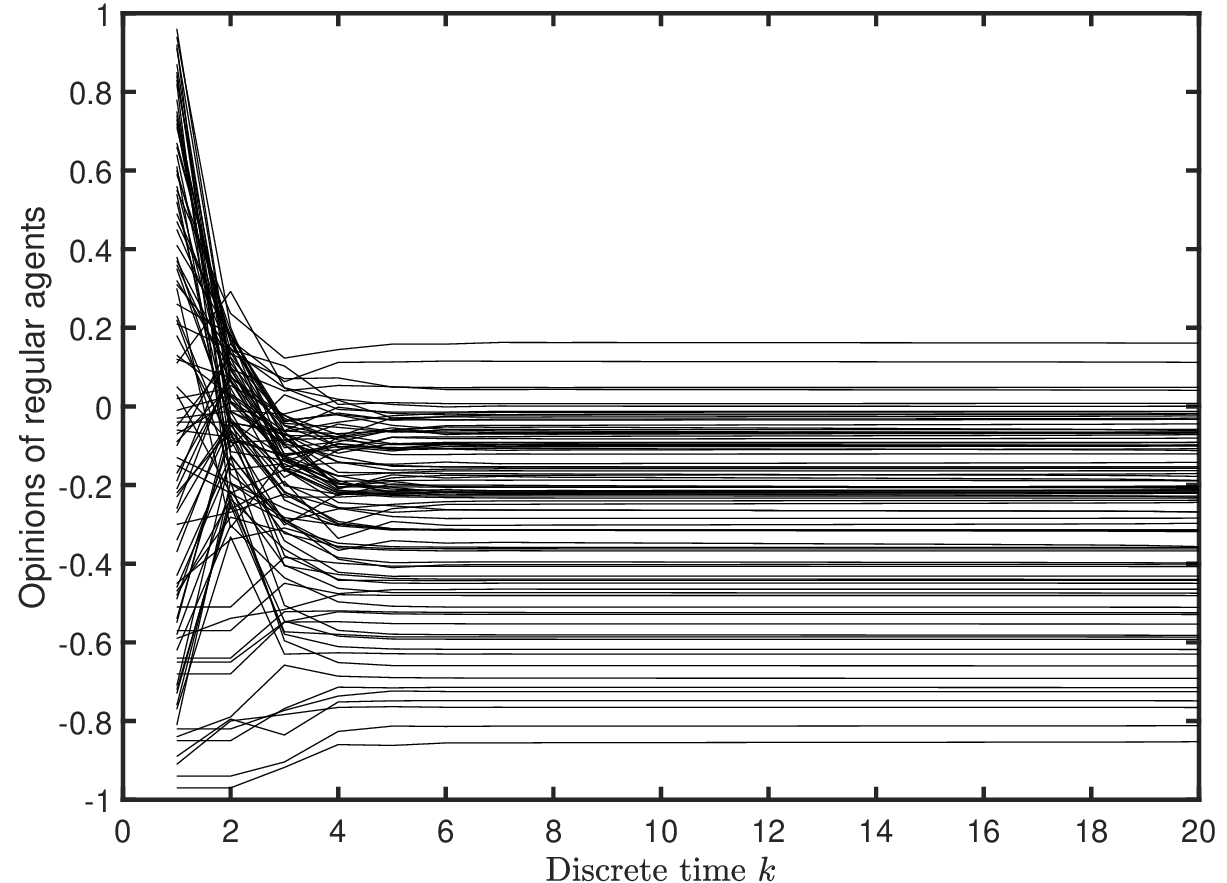}
\caption{Opinions of regular agents $1$ through $96$ versus discrete time $k$ for $k=1,\cdots,20$.}
\label{Opinions} 
\end{figure}

For simulation, we choose $\mathbf{P}=\mathbf{1}_{N_R\times N_R}$, $\epsilon=1$, $\mathbf{Q}=0.1\mathbf{I}_{N_R}$, and $\mathbf{R}=\mathbf{I}_2$. We also choose $\mathbf{S}_{11}(n)=\mathbf{S}_{12}(n)=\mathbf{S}_{22}(n)=\mathbf{0}_{96\times 96}$. Figure \ref{StubbornOpinions} shows the desired opinions of stubborn agents $97$ and $98$, represented by dashed red and blue lines, respectively. The corresponding optimal opinions of the agents $97$ and $98$ are shown by solid red and blue lines.

The optimal cost, ${J}_S^*(k)$, is plotted against discrete time steps $k = 1, \dots, 19$ in Figure \ref{OptimalCost}. The results indicate that the optimal cost converges to zero.

Figure \ref{Openness} presents the optimal openness of the regular agents as a function of discrete time $k$ over the same interval. Furthermore, the opinions of all regular agents are plotted in discrete time in Figure \ref{Opinions}.

\section{Conclusion}\label{Conclusion}
The main contribution of this paper is to develop a game-theoretical model for steering the community’s opinion towards a desired idea (opinion). By dividing a community into stubborn and regular subgroups, the paper applied the FJ model to obtain the opinion evolution dynamics and used the Stackelberg game model to tackle the opinion steering problem. This resulted in a model-based game-theoretic approach that ensures the average opinion of a community can become ultimately closer to an opinion of desire while interests of both stubborn and regular sub-groups are respected and incorporated. For this study, the paper assumes that the stubborn subgroup aims to minimize the change of their opinion as opposed to the regular sub-group that desires to minimize the group openness, while both sub-groups have consensus on imposing costs on deviation from the desired opinion which was defined as the origin of the opinion space, without loss of generality. 
\bibliographystyle{IEEEtran}
\bibliography{reference}

\end{document}